\definecolor{sequentialgrovercolor}{RGB}{73, 190, 170}
\definecolor{parallelgrovercolor}{RGB}{239, 118, 122}
\definecolor{matplotlibgray}{RGB}{127, 127, 127}
\definecolor{matplotlibgreenk31}{RGB}{31, 119, 180}
\definecolor{matplotlibgreenk32}{RGB}{255, 127, 14}
\definecolor{matplotlibgreenk33}{RGB}{44, 160, 44}
\definecolor{matplotlibgreenk34}{RGB}{214, 39, 40}
\definecolor{matplotlibgreenk35}{RGB}{148, 103, 189}
\newcommand{\ket}[1]{|#1\rangle}
\newcommand{\tikzcircle}[2][red,fill=red]{\tikz[baseline=-0.5ex]\draw[#1,radius=#2] (0,0) circle ;}%
\newcommand{\legendmarker}[1]{\tikzcircle[#1, fill=#1]{3pt}}
\DeclareMathOperator{\spn}{span}
\newcommand{\CNOT}{CNOT}
\newcommand{\IAM}{IAM}
\newcommand{\oracle}{O}
\newcommand{\PGseq}[1]{PG_{#1\text{-seq}}}
\newcommand{\PGpar}[1]{PG_{#1\text{-par}}}
\newcommand{\itgrover}{t_G}
\newcommand*{\mathoverlineplot}[1]{%
  \overline{\mbox{\textit{#1}\rule{0pt}{3mm}}}
}
\newcommand*{\mathoverline}[1]{%
  \overline{\vphantom{F}#1}
}
\newcommand*{\mathoverlinee}[1]{%
  \overline{\vphantom{F}#1}
}
\begin{document}

\title{The parallel Grover as dynamic system%
%
}

\author{\authname{Alexander Goscinski}\\[2pt] 
\authadd{State Key Laboratory of Advanced Optical Communication Systems and }\\
\authadd{Networks, Key Lab on Navigation and Location-based Service,}\\
\authadd{Department of Electronic Engineering, Shanghai Jiaotong University,}\\
\authadd{Dongchuan Road 800, Shanghai 200240, China}
}

%
%
\markboth{Complex Systems} 
{The parallel Grover as dynamic system} 

\maketitle

\begin{abstract}
A sequential application of the Grover algorithm to solve the iterated search problem has been improved by Ozhigov \cite{ozhigov1999speedup} by parallelizing the application of the oracle. In this work a representation of the parallel Grover as dynamic system of inversion about the mean and Grover operators is given. Within this representation the parallel Grover for $k=2$ can be interpreted as rotation in three-dimensional space and it can be shown that the sole application of the parallel Grover operator does not lead to a solution for $k>2$. We propose a solution for $k=3$ with a number of approximately $1.51\sqrt{N}$ iterations. 
\end{abstract}


\section{Introduction}
Farhi and Gutmann presented an algorithm for the iterated search problem for $k=2$ \cite{farhi1997quantum}. The algorithm is a sequential application of the Grover operator with the two given oracles. First, the Grover operator with the $f_1$ oracle is applied $[\pi\sqrt{N}/4]$ times, then the Grover operator with the $f_2$ oracle is applied $[\pi\sqrt{N}/4]$ times. The complexity is therefore $[2\pi\sqrt{N}/4]$. Ozhigov was able to show that, by executing the two oracles in parallel, a speed up by a constant factor of $\sqrt{2}$ is possible \cite{ozhigov1999speedup}. Even though the speed up is negligible small, he showed that there exists a method of parallelization in the quantum circuit model beyond classical methods for parallelization. Ozhigov mainly analysis the effects of the parallel Grover for two oracles. He gives a generalization for higher $k$'s to explain that no significant speedup can be obtained with his method for higher $k$'s. In this paper we give a different approach to the problem describing the parallel Grover as dynamic system of local inversion about the mean and Grover operators. Within this representation we can give a geometric interperation of the parallel Grover for $k=2$. Furthermore, we introduce an approximation of the parallel Grover for higher $k$'s. With this approximation we can conclude that a simple periodic application of the parallel Grover operator does not solve the iterative search problem with negligible error. We introduce the problem of diversion of the amplitude to the solution state and give a solution for $k=3$.

\subsection{Iterated search problem}
The standard search problem considers one oracle with a unique solution.
The iterated search problem (ISP) considers multiple oracles for different parts of the input.
Given an input $x$ partitioned into $k$ equal-sized substrings of $x$, such that $x_1x_2\cdots x_k = x$, and multiple oracles of the form $f_i(x_1x_2\cdots x_i)$ with $i = 1,\ldots,k$, then $k$-ISP is the problem of finding the unique solution.
We assume that every oracle has a unique solution. More formal, let $e_1\cdots e_k$ be the unique solution to our problem, then each oracle is defined for $i=1,\ldots,k$ as
\[f_i(x_1x_2\cdots x_i) = \begin{cases} 1 & \text{if }x_1\cdots x_i = e_1\cdots e_i \\ 0 & \text{else.}\end{cases}\]
In this work we use the convention that each substring $x_i$ represents $n$ qubits and the number of possibilities for $x_i$ is $N=2^n$.
We use the following naming convention to separate the state spaces: $\ket{e_i}$ represents the solution at the $i$th position and $\ket{N_i} = \sum_{x=0, x\neq e_i}^{N-1}\ket{x}$ represents the remaining states at the $i$th position.

We define the set of strings $S_j = \{e_j,N_j\}$,
\[T_{i,j}=\begin{cases}S_i\times\cdots\times S_j &\text{, if } i\leq j \\ \emptyset&\text{, else}\end{cases}\]
and $T_{j} = T_{1,j}$.
We define the normalized states as $\mathoverlinee{S_j} = \{\mathoverlinee{e_j},\mathoverlinee{N_j}\}$,
\mbox{$\mathoverlinee{T_{i,j}}=$} \mbox{$\{\mathoverlinee{s}|s\in T_{i,j}\}$} and
the state $\ket{\mathoverline{T_k}} = \prod_{\overline{s}\in \mathoverlinee{T_k}}\ket{\mathoverlinee{s}}$ 
Further, we define the space of valid quantum states in $W_j=\spn(\{\ket{s} |s\in T_j\})$ as
\[Q_j = \{\sum_{\overline{s}\in \mathoverlinee{T_j}}a_{\overline{s}}\ket{\mathoverlinee{s}}\in W_j| \sum_{\overline{s}\in \overline{T_j}}a_{\overline{s}}^2 = 1\}.\] 
Given a state $\ket{s}\in Q_k$, then the state in normalized notation $\ket{\mathoverlinee{s}}$ can be expressed as
\[a\ket{s} = a(N-1)^{\#_N(s)/2}\ket{\mathoverlinee{s}}, \]
where $\#_N(s)$ is the number of $\ket{N_j}$ substates in $\ket{s}$. For example for $k=2$
\begin{align*}
&a\ket{e_1e_2} = a\ket{\mathoverline{e_1e_2}}, &a\ket{e_1N_2} = a\sqrt{N-1}\ket{\mathoverline{e_1N_2}},\\
&a\ket{N_1N_2} = a(N-1)\ket{\mathoverline{N_1N_2}}, &a\ket{N_1e_2} = a\sqrt{N-1}\ket{\mathoverline{N_1e_2}}.
\end{align*}
We call the state $\ket{N_1\ldots N_k}$ source state, $\ket{e_1\ldots e_k}$ sink state and the path $(\ket{N_1\ldots N_k},$ $\ket{e_1N_2\ldots N_k},$\ldots, $\ket{e_1\ldots e_k})$ will be called main path. 

\subsection{Wire notation}
We use strings to describe the wires in the circuit model.
Each character describes one wire, thus substrings describe multiple wires.
Strings are written in front of the operators as input in brackets.
The purpose is to be more accurate with the mathematical description of the circuit. 
As an example we give a description of an oracle for the $2$-ISP.
Let $O_2$ be the gate or matrix representing a decision function $f:\{0,\ldots, 2^{2n-1}\} \rightarrow \{0,1\}$.
Since the function has an input space of $2^{2n}$, $2n$ qubits are required to describe the input of the function.
We will describe these inputs with the subsstring $x_1x_2$ with $x_1 = x_{1_1}\ldots x_{1_n}$ and $x_2 = x_{2_1}\ldots x_{2_n}$.
Usually, to build a quantum gate out of a general function $f$, ancilla qubits are required.
Furthermore, the output wire of the qubit gate is also required to be known.
However, since the ancilla qubits as well as the exact architecture of the quantum gates are of not interest for this work, this information will be ignored in our notation.
The resulting gate of this function is described as $O_2(x_1x_2)$.
Furthermore, for the $\CNOT$ operator the wire(s) represented by the first string are the control qubit(s), and by the second string are the target qubit(s) of the $\CNOT$ operation.
Both strings will be separated by a comma (e.g. $\CNOT(x_1,x_3)$).
A $CNOT$ operator with substrings as input is defined as multiple $CNOT$ gates $CNOT(x_1,x_3) = \prod_{i=1}^n CNOT(x_{1_i},x_{3_i})$.
When a collection of gates can be executed in parallel, we put them into square brackets.
An example of this formalism can be seen in equation \ref{eq:PG_parallel_2} with corresponding circuit diagram in Fig.~\,\ref{eq:PG_parallel_2}.

\subsection{Grover operator as rotation}
The Grover operator can be geometrically interpreted as a rotation by approximately $2/\sqrt{N}$ radians in the subspace spanned by $\ket{e_1}$ and $\ket{N_1}$ \cite{aharonov1999quantum} and can be expressed in matrix form as
\begin{equation}
G (a\ket{\mathoverline{e_1}}+b\ket{\mathoverline{N_1}}) \doteq \begin{bmatrix} (1-\frac2{N}) & 2\frac{\sqrt{N-1}}{N} \\ -2\frac{\sqrt{N-1}}{N} & (1-\frac2{N}) \end{bmatrix} \begin{bmatrix} a \\ b \end{bmatrix}.
\nonumber
\end{equation}
The symbol $\doteq$ stands for "represented by" as it was introduced by \cite[p.~20]{sakurai1995modern}.
We define $\itgrover(N)=\pi\sqrt{N}/4$ as the number of iterations for the Grover algorithm to achieve a success probability with negligible error for large $N$ \cite{boyer1996tight}.
The implementation of the Grover operator consists of an oracle and an inversion about the mean (IAM) operator. We can decompose the rotation operator into two operators: 
\[ G (a\ket{\mathoverline{e_1}}+b\ket{\mathoverline{N_1}}) \doteq \underbrace{\begin{bmatrix} (-1) (1-\frac2{N}) & 2\frac{\sqrt{N-1}}{N} \\ 2\frac{\sqrt{N-1}}{N} & (1-\frac2{N}) \end{bmatrix}}_{\text{IAM operator}} 
\underbrace{\large\begin{bmatrix}-1&0\\0&1\end{bmatrix}}_{\text{Oracle operator}} \begin{bmatrix} a \\ b \end{bmatrix}.\]
The oracle operator can be interpreted as a reflection about the $\ket{e_1}$ space, therefore the IAM operator can be interpreted as a reflection about the $\ket{e_1}$ space with a subsequent rotation by the Grover operation. Both operations are orthogonal transformations. We describe the oracle operator with $I_1$.
\subsection{Sequential Grover operator}
The idea of the sequential Grover operator is to apply $k$ different Grover operator combinations of oracle and IAM operators $[\itgrover(N)]$ times. By this procedure the amplitude is transferred sequentially through the states $\ket{N_1\ldots N_k}, \ket{e_1N_2\ldots N_k}, \ldots, \ket{e_1\ldots e_k}$. The first $[\itgrover(N)]$ steps the amplitude is transported from $\ket{N_1\ldots N_k},$ to $\ket{e_1N_2\ldots N_k}$, in the next $[\itgrover(N)]$ steps from $\ket{e_1N_2\ldots N_k}$ to $\ket{e_1e_2N_3\ldots N_k}$ and so on. Thus, the overall number of iterations is $[k\itgrover(N)]$. The circuit can be expressed as follows:
\begin{multline}
\prod_{i=1}^k(SG_{k+1-i}(x_1\cdots x_i))^{\itgrover(N)}\\\text{ with }SG_{i}(x_1\cdots x_i) = \IAM(x_{i}) \oracle_{i}(x_1\cdots x_{i}) .
\nonumber
\end{multline}
For example for $k=2$ the sequential Grover is $(IAM(x_2)O_2(x_1x_2))^{\itgrover(N)}$ $(IAM(x_1)O_1(x_1))^{\itgrover(N)}$.
\section{Main results}
In this section we give the interpretation of the parallel Grover operator for $k=2$ as rotation in three-dimensional space and discuss why the sole application of the $PG_k$ does not lead to a solution with negligible error for higher $k$'s within $kt_G$ iterations. Before discussing the problem, we proof that we can approximate the IAM operators in the parallel Grover operator with an composition of reflections. Then we give a possible solution for $k=3$. Several steps in this section have been calculated with SymPy \cite{sympy} and can be found in \cite{code_results}.
\subsection{$PG_2$ as rotation within a 3-sphere}
\label{sec:parallel_grover_operator_for_k=2}
The parallel Grover as Ozhigov described it in \cite{ozhigov1999speedup} solves the $2$-ISP. Given the two oracle functions $f_1$ and $f_2$, we ommit the ancilla qubits required for the oracle gates $\oracle_1$ and $\oracle_2$ representing the oracle functions. The wires are splitted in $3$ sets of $n$ wires represented by the strings $x_1$, $x_2$ and $x_3$. Then the parallel Grover operator in its parallel form $\PGpar{2}$ is defined as the follows:
\begin{multline}
\PGpar{2}(x_1x_2x_3) = [\IAM(x_1)\IAM(x_2)] \\ \CNOT(x_3,x_1)[\oracle_1(x_3)\oracle_2(x_1x_2)]\CNOT(x_1,x_3) .
\label{eq:PG_parallel_2}
\end{multline}
In Fig.~\,\ref{fig:PG_2-parallel} we can see the circuit diagram of equation \ref{eq:PG_parallel_2}.
\begin{figure}
\centerline{\includegraphics[scale=1.]{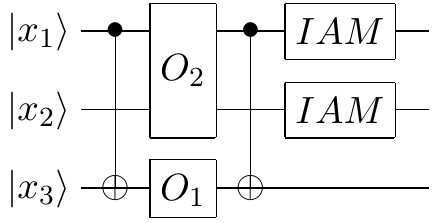}}
\caption{The circuit diagram representing $\PGpar{2}$ in equation \ref{eq:PG_parallel_2}.}
\label{fig:PG_2-parallel}
\end{figure}
In the parallel form it is not clear how the parallel Grover works, therefore Ozhigov reformulated the operator into a sequential form to simplify the analysis:
\[\PGseq{2}(x_1x_2) = \underbrace{\IAM(x_1)O_1(x_1)}_{= PG_{2_1}(x_1)} \underbrace{\IAM(x_2)\oracle_2(x_1x_2)}_{= PG_{2_2}(x_1x_2)} .\]
We can see that it is composed of two operators. These two operators are composed of an oracle and an IAM operator. When applying the operator on an arbitrary quantum state of the form $\ket{s} = a_1|e_1e_2\rangle + a_2|N_1e_2\rangle + a_3|e_1N_2\rangle + a_4|N_1N_2\rangle$, each operator $PG_{2_2}$ and $PG_{2_1}$ can be separated into two operations applied in parallel each on two subspaces.
\begin{equation}
\begin{split}
&PG_{2_2}\ket{s} = |e_1\rangle G(a_1|e_2\rangle + a_3|N_2\rangle) \\ &\hspace{8em} + |N_1\rangle \IAM(a_2|e_2\rangle + a_4|N_2\rangle)\\ 
&PG_{2_1}\ket{s} = G(a_1|e_1\rangle + a_2|N_1\rangle)|e_2\rangle \\ &\hspace{8em} + G(a_3|e_1\rangle + a_4|N_1\rangle)|N_2\rangle .
\label{eq:PG_seq_2_matrix_representation}
\end{split}
\end{equation}
For simplification we write operations like $PG_{2_2}$ as $\big[G(e_1e_2,e_1N_2)$ $\IAM(e_1N_2,N_1N_2)\big]$. Additional, a reflection of a state $\ket{s}$ will be expressed as $I_1(s)$.
We express the parallel Grover in form of a graph of Grover and IAM operations. The states represent nodes and the edges represent operations on the connecting states. 
The edge operation is the operation applied on the two states. The direction of the edge determines the upper space of the operation in matrix form. The order of the operation is determined by the number at the edge. The resulting graph for $PG_2$ can be seen in Fig.~\,\ref{fig:k2_operator_graph} together with an approximation which can be obtained with the results from Theorem~\,\ref{th:replace_theorem} and \ref{th:cubic_iam} in a subsequent section.
\begin{figure}
\centerline{
\begin{subfigure}[b]{0.35\textwidth}
    \includegraphics[width=\linewidth]{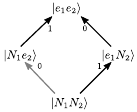}
    \caption{Operator graph $PG_2$.}
    \label{fig:k2_operator_graph}
\end{subfigure}    
\hspace{5em}
\begin{subfigure}[b]{0.35\textwidth}
    \includegraphics[width=\linewidth]{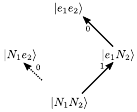}
    \caption{$PG_2$ approximation.}
    \label{fig:k2_operator_graph_approx}
\end{subfigure}
}\vspace{1em}
\centerline{
\begin{subfigure}[b]{0.43\textwidth}
    \includegraphics[width=\linewidth]{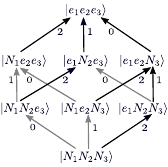}
    \caption{Operator graph $PG_3$.}
    \label{fig:k3_operator_graph}
\end{subfigure}
\hspace{3em}
\begin{subfigure}[b]{0.43\textwidth}
    \includegraphics[width=\linewidth]{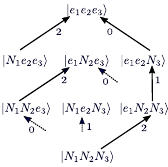}
    \caption{$PG_3$ approximation.}
    \label{fig:k3_operator_graph_approx}
\end{subfigure}
}
\caption{Operator graphs of different operators. \mbox{\raisebox{0.25ex}{\legendmarker{matplotlibgray}} IAM operator}, \mbox{\raisebox{0.25ex}{\legendmarker{black}} Grover operator} and the dotted vector represents $I_1$.}
\label{fig:operator_graphs}
\end{figure}
We made an alternative proof of Ozhigov's results (see Appendix~\ref{sec:appendix_a}), which we believe is more accessible than the original proof. Our results show that $PG_2$ applied on the initial state $\ket{\mathoverline{N_1N_2}} + O(1/\sqrt{N})\ket{\mathoverline{T_2}}$ can be interpreted as rotation in a sphere within $\spn(\{\ket{e_1e_2}, \ket{N_1e_2}, \ket{N_1N_2}\})$. The same interpretation can also be applied for the sequential Grover.
The corresponding rotations expressed with the Euler-Rodriguez formula in quaternion form are
\begin{align*}
SG_{1}^{c\sqrt{N}}\doteq\, & \cos(c) + \sin(c)\ket{\mathoverline{e_1e_2}}\\
SG_{2}^{c\sqrt{N}}\doteq\, & \cos(c) + \sin(c)\ket{\mathoverline{N_1N_2}}\\
PG_2^{c\sqrt{N}}\doteq\, & \cos(\sqrt{2}c) + \frac1{\sqrt{2}} \sin(\sqrt{2}c)(\ket{\mathoverline{e_1e_2}} + \ket{\mathoverline{N_1N_2}}).
\end{align*}
The rotation of the $\ket{\mathoverline{N_1N_2}}$ vector by the parallel and the sequential Grover is visualized in Fig.~\,\ref{fig:k2_sphere}.
\subsection{Diverted amplitude problem}
The circuit for $\PGseq{k}$ in recursive form is
\[\PGseq{k}(x_1\cdots x_k) =\\ \PGseq{k-1}(x_1\cdots x_{k-1})IAM(x_k)O_k(x_1\cdots x_k)\]
With methods we explain in the next section $PG_3$ can be approximated as it can be seen in Fig.~\,\ref{fig:k3_operator_graph}.
\begin{figure}
\centerline{
\begin{subfigure}[b]{0.44\textwidth}
    \includegraphics[width=\linewidth]{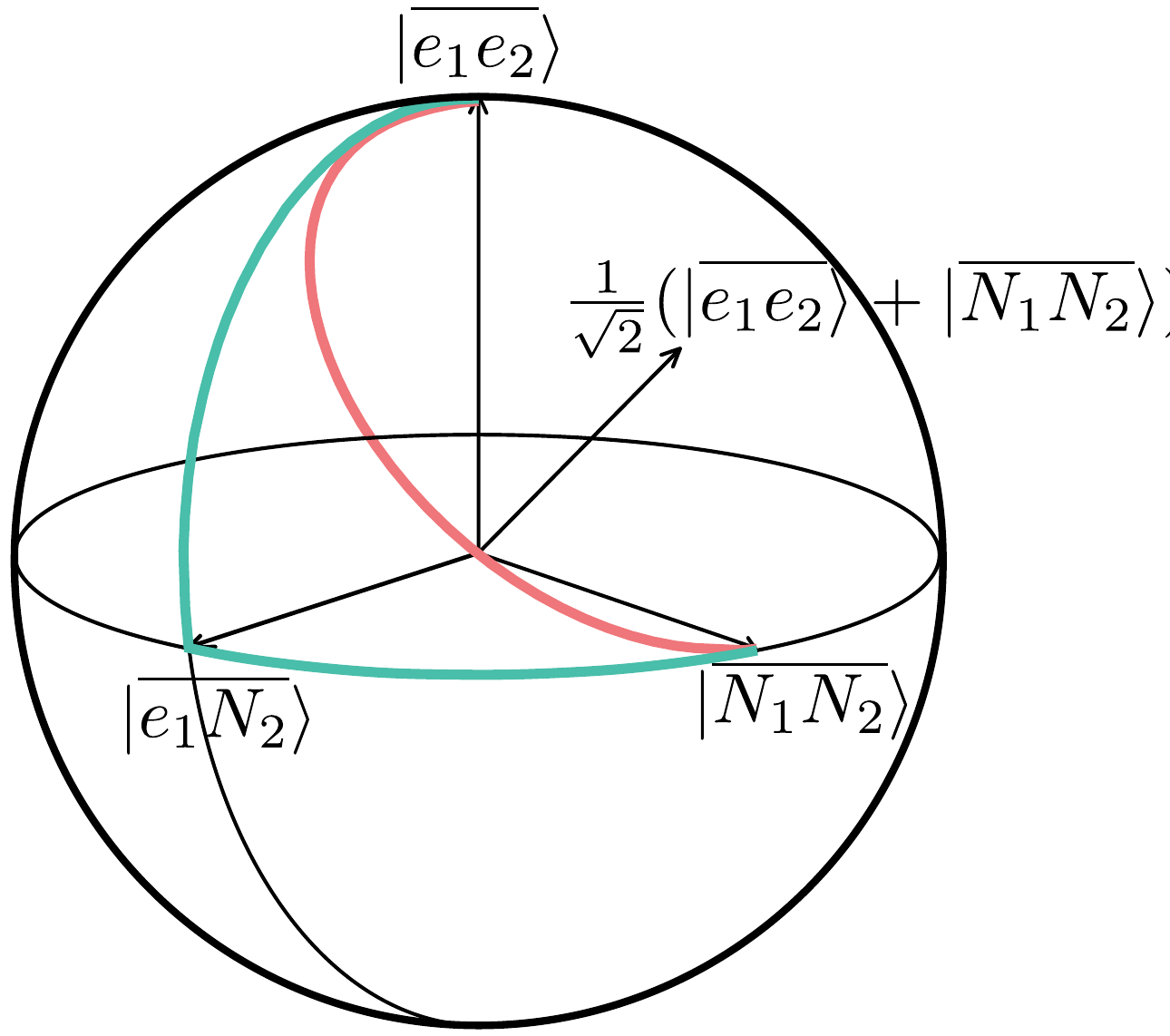}
    \vspace{0.2em}
    \caption{Solution for $2$-ISP as rotation in a sphere. \mbox{\raisebox{0.25ex}{\legendmarker{sequentialgrovercolor}} Sequential Grover}, \mbox{\raisebox{0.25ex}{\legendmarker{parallelgrovercolor}} Parallel Grover}.}
    \label{fig:k2_sphere}
\end{subfigure}
\hspace{0.5em}
\begin{subfigure}[b]{0.548\textwidth}
    \includegraphics[width=\linewidth]{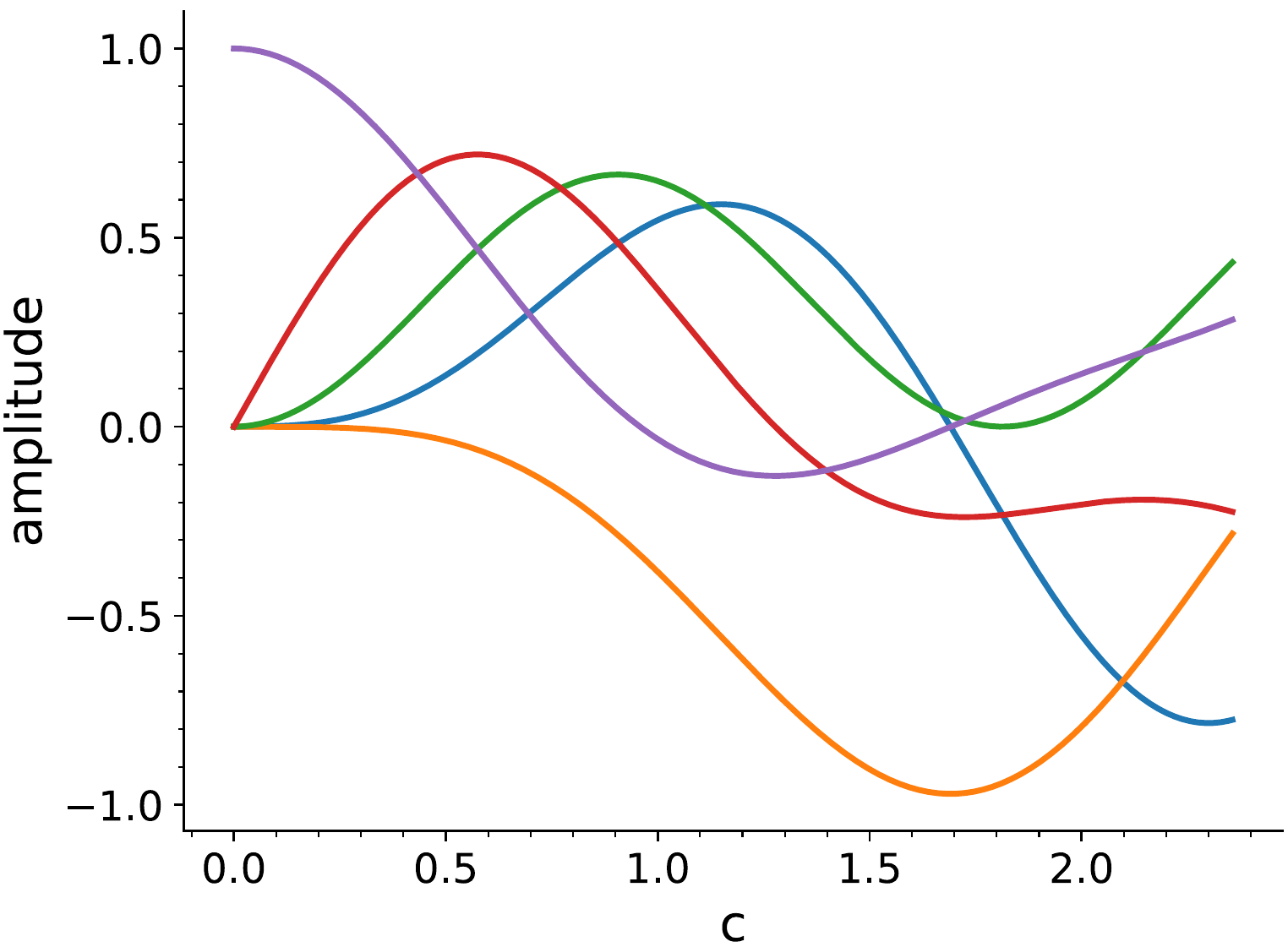}
    \caption{The amplitude evolution of $PG_3^{c\sqrt{N}}\ket{N_1N_2N_3}$. \mbox{\raisebox{0.25ex}{\legendmarker{matplotlibgreenk31}} $\ket{e_1e_2e_3}$}, \mbox{\raisebox{0.25ex}{\legendmarker{matplotlibgreenk32}} $\ket{N_1e_2e_3}$}, \mbox{\raisebox{0.25ex}{\legendmarker{matplotlibgreenk33}} $\ket{e_1e_2N_3}$}, \mbox{\raisebox{0.25ex}{\legendmarker{matplotlibgreenk34}} $\ket{e_1N_2N_3}$}, \mbox{\raisebox{0.25ex}{\legendmarker{matplotlibgreenk35}} $\ket{N_1N_2N_3}$}.}
\label{fig:k3_amplitude_evolution}
\end{subfigure}
}
\caption{Visualization of $PG_2$ and $PG_3$.}
\label{fig:k3_operator_graph_amplitude_evolution}
\end{figure}
Due to the operation $G(e_1e_2e_3,e_1e_2N_3)$ the amplitude is diverted from the state $\ket{e_3e_2e_1}$ to $\ket{e_1e_2N_3}$ as it can be seen in the amplitude evolution in Fig.~\,\ref{fig:k3_amplitude_evolution}.
The reason why the edge operations $G(e_1e_2e_3,e_1N_2e_3)$ and $G(e_1e_2N_3,N_1e_2N_3)$ can be removed but for $G(e_1e_2e_3,N_1e_2e_3)$ not depends on the number of incident IAM operators. For $G(e_1e_2e_3,N_1e_2e_3)$ the number is even while for the other cases it is uneven. We proof this in the next section. As a consequence the diversion of amplitude occurs for higher $k$'s more significant. We propose a solution to the problem in a subsequent section for the $3$-ISP.

\subsection{Approximation of $PG_k$}
For a state $\ket{s}\in Q_k$ depending on the wire the IAM gate is applied on,
the application of the IAM gate is the same as applying multiple local IAM operations
\[\IAM(x_i)\ket{s} = \prod_{g\in T_{i-1},h\in T_{i+1,k}} \IAM(\ket{ge_ih},\ket{gN_ih})\]
For simplification we will write it as $\IAM(x_i)\ket{s} = \IAM(T_{i-1}e_iT_{i+1,k},$ $T_{i-1}N_iT_{i+1,k})$.
Depending on the preceding oracle operator even a Grover or an IAM operator is applied on the pairs of states.
\begin{theorem}
Let $S:V\rightarrow V$, $A,B:U\rightarrow U$ be linear operators with $U\subseteq V$. Assume $((S-I)v)_{|U} = O(1/\sqrt{N})$, $Av = Bv + O(1/\sqrt{N})$ for any vector $\|v\|\leq 1$ and $A^2,B^2 = I + O(1/N)$, then $A$ can be replaced with $B$ within $(AS)^n$ with a negligible error $(AS)^nv = (BS)^nv + O(1/\sqrt{N})$ for a natural number $n$ in $O(\sqrt{N})$.
\label{th:replace_theorem}
\end{theorem}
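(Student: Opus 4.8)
The plan is to exploit a telescoping expansion of the difference $(AS)^n v - (BS)^n v$ and control each term using the three hypotheses. First I would write the standard hybrid decomposition
\[
(AS)^n - (BS)^n = \sum_{j=0}^{n-1} (AS)^{n-1-j}(A-B)S(BS)^{j},
\]
which is valid for any two operators, and apply it to the vector $v$. The naive bound here is fatal: each factor $(A-B)$ contributes $O(1/\sqrt{N})$ by hypothesis, but there are $n = O(\sqrt{N})$ terms, so term-by-term summation gives only $O(1)$, not the claimed $O(1/\sqrt{N})$. Hence the whole difficulty is to show that the accumulated error does not grow linearly in $n$, and this is where the remaining hypotheses must be used in a genuinely nontrivial way.

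The key idea I would pursue is that $(A-B)$ does not merely have small norm on arbitrary vectors, but that when $A-B$ acts on the relevant states the leading-order term is \emph{normal to the rotation} generated by repeated application of the near-isometries. Concretely, from $A^2 = I + O(1/N)$ and $B^2 = I + O(1/N)$ each of $A$ and $B$ is, to leading order, an involution (a reflection), and the hypothesis $((S-I)v)_{|U} = O(1/\sqrt{N})$ says that $S$ restricted to $U$ is close to the identity at the same scale. I would set up the evolution on the invariant (or nearly invariant) subspace $U$ and track the state $w_j = (BS)^j v$. The crucial estimate to prove is that the vectors $S(BS)^j v$ lie, up to $O(1/\sqrt{N})$, in a fixed small-dimensional subspace (the plane of rotation, as in the $k=2$ sphere picture), and that on this subspace $(A-B)$ kills the component carrying the amplitude and only perturbs a complementary direction. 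Establishing that the errors are \emph{coherent}---that they add with cancelling phases rather than in absolute value---is the main obstacle.

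To make the cancellation rigorous I would track the cumulative error vector $\varepsilon_n = (AS)^n v - (BS)^n v$ by a recursion rather than a raw sum:
\[
\varepsilon_{m+1} = AS\,\varepsilon_m + (A-B)S(BS)^m v.
\]
Because $A = I + O(1/\sqrt{N})$ to leading reflection order and $S = I + O(1/\sqrt{N})$ on $U$, the map $AS$ acting on the small error is $\mathrm{Id} + O(1/\sqrt{N})$, so I would show $\|\varepsilon_{m+1}\| \leq \|\varepsilon_m\| + \|(A-B)S(BS)^m v\| + O(1/N)\|\varepsilon_m\|$. The second obstacle---and the real content---is to prove that the inhomogeneous driving terms $(A-B)S(BS)^m v$, when resummed through the rotation $AS$, telescope: I would show each driving term equals $O(1/\sqrt{N})$ times a vector that is (to leading order) a coboundary $u_{m+1}-u_m$ for a bounded sequence $u_m$, so that $\sum_m AS(\cdots)$ collapses to a boundary contribution of size $O(1/\sqrt{N})$. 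The bounded-sequence / coboundary structure should follow from $A^2 = B^2 = I + O(1/N)$, which forces the leading part of $A-B$ to anticommute appropriately with the rotation generator and hence to produce alternating-sign contributions. If a clean coboundary structure cannot be exhibited, the fallback is a second-order ($L^2$) energy estimate on $\varepsilon_n$ showing that the error variance grows like $n \cdot (1/N) = O(1/\sqrt{N}\cdot 1/\sqrt{N}\cdot\sqrt{N})$, i.e. $\|\varepsilon_n\|^2 = O(1/\sqrt{N})$, using that cross-terms in $\|\varepsilon_n\|^2$ average to zero; this diffusive (random-walk) bound is the most likely route if exact cancellation is unavailable. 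I expect the step proving that the $O(1/\sqrt{N})$ deviations accumulate diffusively rather than ballistically to be the decisive and hardest point.
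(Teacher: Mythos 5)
Your diagnosis of the difficulty is exactly right: the naive term-by-term bound on the hybrid/telescoping sum gives $O(1)$, and the whole content of the theorem is that the $n=O(\sqrt{N})$ error terms do not accumulate ballistically. However, as written your proposal does not close this gap: it offers two candidate mechanisms and proves neither, and one of them is not viable. The ``diffusive / $L^2$ energy'' fallback has no basis here --- $A$, $B$, $S$ are fixed deterministic operators, so there is no averaging principle that makes the cross terms in $\|\varepsilon_n\|^2$ vanish; absent an actual algebraic cancellation the cross terms are of the same order as the diagonal ones and you recover the ballistic $O(1)$ bound. Also, the assertion that ``$A = I + O(1/\sqrt{N})$ to leading reflection order'' is false and should be removed: $A^2 = I + O(1/N)$ makes $A$ a near-involution (in the application, a reflection such as $I_1(s_1)$), not a near-identity. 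For the recursion $\|\varepsilon_{m+1}\|\le\|\varepsilon_m\|+\|(A-B)S(BS)^m v\|+\dots$ you only need $\|AS\|\le 1+O(1/N)$, which is what you should invoke (and which, strictly speaking, does not follow from $A^2=I+O(1/N)$ alone; the paper implicitly uses that all operators involved are orthogonal).

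Your first mechanism, by contrast, is correct and can be completed in two lines, so you should commit to it rather than hedge: pair consecutive terms of the telescoping sum. For $j=2m,2m+1$ the paired contribution is $(AS)^{n-2-2m}\bigl[AS(A-B)+(A-B)SB\bigr]S(BS)^{2m}v$, and writing $S=I+\epsilon$ with $\epsilon=O(1/\sqrt{N})$ on $U$ gives $AS(A-B)+(A-B)SB = A^2-B^2 + A\epsilon(A-B)+(A-B)\epsilon B = O(1/N)$, since each of the last two products stacks two $O(1/\sqrt{N})$ factors. Hence each of the $O(\sqrt{N})$ pairs contributes $O(1/N)$, the at most one unpaired term contributes $O(1/\sqrt{N})$, and the total is $O(1/\sqrt{N})$; no coboundary or anticommutation structure is needed, only $A^2-B^2=O(1/N)$ and norm control on the prefix. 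Note this is genuinely different from the paper's route, which avoids the telescoping sum altogether: there one unrolls $v_{i+1}=A(v_i+r_{i+1})$ with $r_{i+1}$ the $O(1/\sqrt{N})$ action of $S-I$ on $U$, obtaining $v_i=A^iv_0+\sum_j A^{i-j+1}r_j$, then collapses every power $A^m$ to $A^{m\bmod 2}$ using $A^2=I+O(1/N)$, so that the closed form contains $A$ at most once and the substitution $Aw=Bw+O(1/\sqrt{N})$ is paid for only once. The paper's version has the side benefit of producing the explicit approximate formula for $(AS)^iv$ that is reused elsewhere; your pairing argument, once actually carried out, is a more self-contained proof of the replacement statement itself.
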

\begin{proof}
Let $r_i$ be the change of $S$ in $U$ in the $i$th step with $\|r_i\| = O(1/\sqrt{N})$,
then we can express $(AS)^iv = v_{i}$ recursively
\[v_{i+1} = A(v_i+r_{i+1}).\]
Applying Equation~\ref{eq:approx_nthpower} with the substitution $A^2= I + O(1/N)$, we can approximate $A^i$ for a natural number $i$ in $O(\sqrt{N})$ 
\[A^i = \begin{cases} A + O(1/\sqrt{N}) & i\text{ uneven}\\ I + O(1/\sqrt{N}) & i\text{ even.}\end{cases}\]
Applying this on the explicit form of $v_i$ the total error propagates like $\sum_{j=1}^i \sum_{l=1}^j \frac1N = O(1/\sqrt{N})$,
thus we can approximate $v_i$ with an error in $O(1/\sqrt{N})$
\[v_i = A^{i+1 \bmod 2}\bigg(v_0 +\sum_{j\text{ uneven}}^i r_{j} \bigg) + A^{i \bmod 2}\sum_{j\text{ even}}^i r_{j} + O(\frac{1}{\sqrt{N}}).\]
The same steps can be done for $SB^i$. Since $Ar_j= Br_j + O(1/N)$ and $Av_0 = v_0 + O(1/\sqrt{N})$, we can replace $A$ with $B$ 
within $(SA)^n$ up to an error in $O(1/\sqrt{N})$.
\end{proof}
Subsystems only connected to Grover operators and the lower dimension of IAM operators fullfil the requirements of a operator $S$
\begin{equation}
\begin{split}
\big(G(s_1,s_2)-I(s_1,s_2)\big)(a\ket{s_1}+b\ket{s_2}) = O(1/\sqrt{N})(a\ket{s_1}+b\ket{s_2}), \\
\big(\IAM(s_1,s_2)-I(s_1,s_2)\big)(a\ket{s_1}+b\ket{s_2}) = a\ket{s_1}+O(1/\sqrt{N})b\ket{s_2}.
\end{split}
\end{equation}
The conditions of operators $A,B$ of Theorem~\ref{th:replace_theorem} can be applied for operators
\begin{equation}
\begin{split}
A = \IAM(s_1,s_2) \text{ and } B = I_1(s_1) \\
A = G(s_1,s_2)I_1(s_2) \text{ and } B = I_1(s_2)
\end{split}
\label{eq:grover_iam_approx_i1}
\end{equation}
for arbitrary states $s_1,s_2\in T_k$.

Furthermore, we can apply the Theorem~\,\ref{th:replace_theorem} on cubic structures of IAM operators like the stucture in $PG_3$ between the states $\ket{N_1S_2S_3}$.
We define a cubic structure recursively
\begin{align*}
&\IAM(gS_ih) = \IAM(ge_ig,sN_ih), \\
&\IAM(gT_{i,j}h) = [\IAM(gT_{i,j-1}e_{j}h)\IAM(gT_{i,j-1}N_{j}h)]\\
&\hspace{10em} \IAM(gT_{i,j-1}e_jh, gT_{i,j-1}N_jh)
\nonumber
\end{align*}
for arbitrary $g\in T_{j+i+2,k}$ and $s\in T_{j-1}$.
We define the matrix representation of the cubic structure $\IAM(T_k) \doteq C_k$.
\begin{theorem}
For any $k$ there exist a composition of reflections $I_i$, such that $C_kv = I_iv + O(1/N)$ for any vector $v$ with $\|v\|=O(1/\sqrt{N})$ and $C_k^2 = I + O(1/N)$.
\label{th:cubic_iam}
\end{theorem}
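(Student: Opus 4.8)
The plan is to induct on the dimension $m=j-i+1$ of the cube, peeling off one wire at a time using the recursive definition of the cubic structure. First I would record the two facts about a single two-dimensional block that already follow from the Grover section: writing $\IAM(s_1,s_2)=I_1+E$, where $I_1$ is the reflection negating the first (solution-type) state, the coupling obeys $\|E\|=O(1/\sqrt N)$, and moreover $\IAM(s_1,s_2)^2=I$ holds \emph{exactly} (the block is an orthogonal involution, since its diagonal and off-diagonal entries square-sum to $1$). This settles the base case $m=1$, where $C_1=\IAM(s_1,s_2)$ meets both assertions with $I_i=I_1$.

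For the inductive step I would unfold $\IAM(gT_{i,j}h)=[\IAM(gT_{i,j-1}e_jh)\IAM(gT_{i,j-1}N_jh)]\,\IAM(gT_{i,j-1}e_jh,gT_{i,j-1}N_jh)$ and show it telescopes into a tensor product. The two parallel factors act on the disjoint $e_j$- and $N_j$-slices and jointly realize the cubic structure on wires $i,\dots,j-1$ while acting as the identity on wire $j$; the joining factor couples states agreeing on wires $i,\dots,j-1$ and is exactly the two-dimensional $\IAM$ on wire $j$. Operators attached to different wires act on different tensor factors and therefore commute, so the structure factors as $C_{i,j-1}\otimes\IAM$, and by induction $C_k=\IAM^{\otimes k}$ on the relevant block. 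The bookkeeping point to verify carefully here is that the joining ``\,$\IAM$ between block-states'' really is the fixed $2\times2$ matrix: this holds because inversion about the mean on wire $j$ preserves the normalized subspace $\spn(\{\ket{\overline{e_j}},\ket{\overline{N_j}}\})$ and acts there by the same matrix regardless of the other wires.

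Granting the tensor form, the second assertion is immediate: $C_k^2=(\IAM^{\otimes k})^2=(\IAM^2)^{\otimes k}=I$, hence certainly $C_k^2=I+O(1/N)$. For the first assertion I would expand $C_k=\bigotimes_l(I_1+E_l)=\bigl(\bigotimes_l I_1\bigr)+F$. The leading term $I_i:=\bigotimes_l I_1$ is a composition of reflections, namely the diagonal $\pm1$ operator whose sign on $\ket{s}$ is the parity of the number of solution-type coordinates of $s$; every remaining term of $F$ carries at least one factor $E_l=O(1/\sqrt N)$. Since $k$ is a fixed constant there are only finitely many such terms, so $\|C_k-I_i\|=O(1/\sqrt N)$, and applied to a vector with $\|v\|=O(1/\sqrt N)$ this gives $C_kv=I_iv+O(1/N)$, as required.

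The step I expect to be the genuine obstacle---the one that must be handled carefully if one does not route through the clean tensor identity---is bounding $C_k^2-I$ directly when the constituent reflections fail to commute. Two $\IAM$ blocks sharing a state have a commutator of order $1/\sqrt N$, so a naive estimate yields only $C_k^2=I+O(1/\sqrt N)$. The resolution is precisely the even/odd incidence phenomenon already flagged in the text: the limiting reflection $I_i$ assigns \emph{opposite} signs to the two endpoints of every $\IAM$ edge, since those endpoints differ in a single coordinate and hence in the parity of their solution-type count, so that $I_iE_lI_i=-E_l$ for each coupling. Writing $F$ as its single-$E_l$ part plus an $O(1/N)$ remainder, this anticommutation forces the first-order anticommutator $\{I_i,F\}$ to vanish, which is exactly what upgrades $C_k^2=I+\{I_i,F\}+O(1/N)$ from $O(1/\sqrt N)$ to the claimed $O(1/N)$.
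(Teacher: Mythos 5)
Your proof is correct, but it reaches the conclusion by a genuinely different route than the paper. The paper does not use the tensor factorization: for the first claim it simply replaces each constituent $\IAM$ by $I_1$ and collects the $O(1/\sqrt{N})$ errors, and for the second claim it runs an induction on $k$ in which $C_k^2$ is unfolded into its four stages (joining $\IAM$, parallel $C_{k-1}$'s, joining $\IAM$, parallel $C_{k-1}$'s) and the amplitudes $v_{e_k},v_{N_k}$ are tracked explicitly; the $O(1/\sqrt{N})$ cross terms cancel because $C_{k-1}=I_i+O(1/\sqrt{N})$, and the induction hypothesis $C_{k-1}^2=I+O(1/N)$ finishes the step. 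Your observation that the recursion telescopes as $C_{i,j}=(C_{i,j-1}\otimes I)(I\otimes M)=C_{i,j-1}\otimes M$, hence $C_k=M^{\otimes k}$ with $M$ the exact $2\times 2$ $\IAM$ block (symmetric, orthogonal, trace $0$, determinant $-1$), is valid and buys a strictly stronger conclusion: $C_k^2=I$ exactly, with no error term and no induction hypothesis needed, and it makes the limiting reflection $I_i=\bigotimes_l I_1$ and its parity interpretation explicit --- which is exactly the even/odd incidence phenomenon the paper invokes informally elsewhere. What the paper's slice-by-slice computation buys in exchange is that it does not rely on the two slices carrying identical copies of the same operator, so its mechanism (first-order cancellation against $I_i$) transfers to the mixed Grover/$\IAM$ structures appearing in the full $PG_k$, where no clean tensor factorization exists. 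Your closing analysis of the non-tensor route --- that $\{I_1,E_l\}=O(1/N)$ because the $\IAM$ edge joins states of opposite parity, so the first-order anticommutator in $C_k^2=I+\{I_i,F\}+O(1/N)$ vanishes --- is also correct and is in substance the cancellation the paper's computation performs by hand.
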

\begin{proof} 
$\IAM(T_k)$ is a composition of IAM operators. For any vector $v$ with $\|v\|\leq 1$ it is $C_1 v = I_1v + O(1/\sqrt{N})$.
Thus, by replacing each $\IAM$ with $I_1$ we obtain $I_i$ with an error in $O(1/\sqrt{N})$ and the first part of the theorem follows.



For the second part we use the fact that the IAM operator can be approximated for states $\ket{T_{k-1}e_k},\ket{T_{k-1}N_k}\in W_k$ with the amplitude vector $\|(v_e, v_N)\|\leq1+O(1/\sqrt{N})$ with 
\[\IAM(T_{k-1}e_k,T_{k-1}N_k)(v_e\ket{\mathoverlinee{T_{k-1}e_k}}+v_{F}\ket{\mathoverlinee{T_{k-1}N_k}}) 
\doteq \begin{bmatrix}-v_e+\frac{2v_N}{\sqrt{N}}\\v_N+\frac{2v_e}{\sqrt{N}}\end{bmatrix} + \epsilon\]
for a $\epsilon\in O(1/N)$.
As mentioned before, for any $k$ there exist a $I_i$ such that $\IAM(T_{k-1}) = I_i+O(1/\sqrt{N})$ .
Let $v$ be the amplitude vector of an arbitrary state in $Q_k$.
We express with $v_{e_k}$ the subspace $\spn(\{\ket{se_k}|s\in T_{k-1}\})$ and 
with $v_{N_k}$ the subspace $\spn(\{\ket{sN_k}|s\in T_{k-1}\})$ of the vector $v$.
We will show that $C_k^2v=v+O(1/N)$. In each step we omit an error in $O(1/N)$.
First the operation $\IAM(T_{k-1}e_k,T_{k-1}N_k)$ on $v$ returns
\[v'_{e_k} =  -v_{e_k} + \frac2{\sqrt{N}}v_{N_k},\qquad v'_{N_k} =  v_{N_k} + \frac2{\sqrt{N}}v_{e_k} .\]
As the next step, in the operation $\IAM(T_{k-1}e_{k})\IAM(T_{k-1}N_k)$ the $\IAM$ operations can be approximated with $I_i$ for parts of the vector in $O(1/\sqrt{N})$
\[v''_{e_k} =  -C_{k-1}v_{e_k} + \frac2{\sqrt{N}}I_iv_{N_k},\qquad v''_{N_k} =  C_{k-1}v_{N_k} + \frac2{\sqrt{N}}I_iv_{e_k}.\] 
The operator $\IAM(T_{k-1}e_k,T_{k-1}N_k)$ is applied again
\begin{align*}
&v'''_{F_e} =  -(-C_{k-1}v_{e_k} + \frac2{\sqrt{N}}I_iv_{N_k}) + \frac2{\sqrt{N}}C_{k-1}v_{N_k},\\
&v'''_{N_k} =  C_{k-1}v_{N_k} + \frac2{\sqrt{N}}I_iv_{e_k} - \frac2{\sqrt{N}}C_{k-1}v_{e_k}.
\end{align*}
Then the operation $\IAM(T_{k-1}e_k)\IAM(T_{k-1}N_k)$ returns $v$ again by
applying the induction hypothesis $C_{k-1}^2v=v+O(1/N)$
\begin{align*}
&v''''_{F_e} = v_{e_k} -\frac2{\sqrt{N}}v_{N_k} + \frac2{\sqrt{N}}v_{N_k} = v_{e_k}, \\
&v''''_{N_k} =  v_{N_k} + \frac2{\sqrt{N}}v_{e_k} - \frac2{\sqrt{N}}v_{e_k} = v_{N_k}.
\end{align*}
\end{proof}
As a consequence we can approximate the IAM operators in $PG_k$ with independent $C_i$ operators.
Further, we show that $PG_k$ only consists of independent $C_i$ operators and Grover operators.
Firstly, we define $t_k=e_1\ldots e_k$ where $t_0$ is in empty string, then the parallel Grover operator can be expressed recursively 
\begin{equation}
\begin{split}
PG_{1}(S_1h) &= G(S_1h) \\
PG_{i}(T_{i}h) &= PG_{i-1}(T_{i-1}e_{i}h)PG_{i-1}(T_{i-1}N_{i}h) \\
&\prod_{g\in T_{i-1}\setminus\{t_{i-1}\}}\IAM(ge_ih,gN_ih)G(t_{i-1}e_ih,t_{i-1}N_ih)
\label{eq:recursive_pgk}
\end{split}
\end{equation}
for an arbitrary $h\in T_{i+1,k}$. 
Basically, the recursive rule connects two $PG_{k-1}$ systems with one Grover operator between both their sink states and the remaining bipartite connections with IAM operators.
\begin{corollary}
The IAM operators in $PG_k$ can be approximated with $I_i$ operators.
\end{corollary}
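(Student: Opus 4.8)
The plan is to read the structure of $PG_k$ directly off the recursive rule in Equation~\ref{eq:recursive_pgk} and then discharge the approximation block by block. That recursion writes $PG_i$ as two copies of $PG_{i-1}$ together with a single Grover operator $G(t_{i-1}e_ih, t_{i-1}N_ih)$ on the two sink states and a family of $\IAM(ge_ih, gN_ih)$ operators on the remaining bipartite pairs. Unfolding it shows that $PG_k$ is assembled from exactly two kinds of factors, Grover operators and IAM operators. First I would group the IAM factors according to the recursive definition of the cubic structure $\IAM(T_k)\doteq C_k$, so that every IAM operator in $PG_k$ is absorbed into some independent cubic block $C_i$ acting on a disjoint bipartite subspace. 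This recasts $PG_k$ as a product of Grover operators and independent cubic IAM operators, which is the structural content behind the corollary.

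Next I would verify that each cubic block plays the role of the operator $A$ in Theorem~\ref{th:replace_theorem}, with $B=I_i$ the associated composition of reflections. The first part of Theorem~\ref{th:cubic_iam} supplies $C_i v = I_i v + O(1/\sqrt{N})$ for every $\|v\|\leq 1$, which is precisely the hypothesis $Av = Bv + O(1/\sqrt{N})$; the second part supplies $C_i^2 = I + O(1/N)$, and since $I_i$ is an orthogonal reflection one has $I_i^2 = I$ outright. Hence both conditions demanded of $A$ and $B$ hold for each cubic block.

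It then remains to exhibit the slow operator $S$. For a fixed block I would reorder the factors of $(PG_k)^n$ so that the block appears as the leading factor $A$ in each iteration and collect the complementary product, namely the Grover operators together with the reflections already substituted for the other blocks, as $S$. The displayed estimates following Theorem~\ref{th:replace_theorem} and the pairings in Equation~\ref{eq:grover_iam_approx_i1} give $((S-I)v)_{|U} = O(1/\sqrt{N})$ on each relevant two-dimensional subspace $U$, since a Grover step moves amplitude by only $O(1/\sqrt{N})$ per iteration and the reflections act as the identity on the upper component. With all hypotheses met, Theorem~\ref{th:replace_theorem} replaces $C_i$ by $I_i$ inside $(PG_k)^n$ at cost $O(1/\sqrt{N})$. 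As the number of cubic blocks is fixed independently of $N$ while $n\in O(\sqrt{N})$, chaining the substitutions over all blocks keeps the accumulated error in $O(1/\sqrt{N})$, which yields the corollary.

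The hard part will be the bookkeeping that keeps the hypotheses of Theorem~\ref{th:replace_theorem} valid simultaneously, and in the right order, across all blocks. I would need to confirm that the blocks extracted from the recursion are genuinely independent, acting on disjoint bipartite subspaces so that they commute and each can be rotated into the leading position of every iteration without disturbing the others, and that once a block has been replaced by its reflection $I_i$ the complementary product still qualifies as a slow $S$ for the next replacement, so the substitutions can be chained. Equally delicate is that the grouping of the individual $\IAM$ factors into cubic structures must be exactly the one for which Theorem~\ref{th:cubic_iam} delivers $C_i^2 = I + O(1/N)$; a careless grouping need not square to the identity, so correctly identifying the cubic blocks from Equation~\ref{eq:recursive_pgk} is the combinatorial crux of the argument.
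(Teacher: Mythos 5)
Your strategy coincides with the paper's: unfold the recursion in Equation~\ref{eq:recursive_pgk}, collect the IAM factors into disjoint cubic structures, and then invoke Theorem~\ref{th:cubic_iam} together with Theorem~\ref{th:replace_theorem} to swap each cubic block for a composition of reflections, with the Grover factors playing the role of the slow operator $S$. The supporting observations you make (that $I_i^2=I$ exactly, that disjointness of the blocks lets the substitutions be chained, that a Grover step only moves amplitude by $O(1/\sqrt{N})$ per iteration) all match what the paper uses.

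However, the step you explicitly defer --- ``correctly identifying the cubic blocks from Equation~\ref{eq:recursive_pgk} is the combinatorial crux'' --- is essentially the entire content of the paper's proof, so as written your proposal omits the one thing that actually gets proved. The paper carries it out by induction: the two $PG_{k-1}$ copies contribute, by the induction hypothesis, the products $\prod_{i=0}^{k-3}\IAM(t_iN_{i+1}T_{i+2,k-1}e_k)\,\IAM(t_iN_{i+1}T_{i+2,k-1}N_k)$, and the bipartite layer $\prod_{g\in T_{k-1}\setminus\{t_{k-1}\}}\IAM(ge_kh,gN_kh)$ is then repartitioned (using $T_{k-1}\setminus\{t_{k-1}\}=\bigcup_i \{t_iN_{i+1}\}\times T_{i+2,k-1}$) into exactly the connecting operators $\IAM(t_iN_{i+1}T_{i+2,k-1}e_k,\,t_iN_{i+1}T_{i+2,k-1}N_k)$ demanded by the recursive definition of a cubic structure; together these assemble into $\prod_{i=0}^{k-2}\IAM(t_iN_{i+1}T_{i+2,k})$, which are pairwise separated. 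Without this identification there is no guarantee that an arbitrary grouping of the IAM factors squares to $I+O(1/N)$ --- a point you yourself raise --- so you should supply this induction to close the argument. One further small correction: Theorem~\ref{th:cubic_iam} as stated gives $C_kv=I_iv+O(1/N)$ only for $\|v\|=O(1/\sqrt{N})$, not $C_iv=I_iv+O(1/\sqrt{N})$ for all $\|v\|\le 1$ as you quote it; the latter is what Theorem~\ref{th:replace_theorem} needs and is what the proof of Theorem~\ref{th:cubic_iam} actually establishes for $\|v\|\le 1$, so the invocation survives, but the hypothesis should be cited accurately.
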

\begin{proof}
The IAM operators in $PG_k$ can be expressed as cubic structures:
\[\prod_{i=0}^{k-2} \IAM(t_iN_{i+1}T_{i+2,k}).\]
Applying the induction hypothesis on the recursive form of $PG_k$ in Equation~\ref{eq:recursive_pgk} the IAM operators in $PG_k$ are
\begin{multline}
\prod_{i=0}^{k-3}\IAM(t_iN_{i+1}T_{i+2,k-1}e_k)\IAM(t_iN_{i+1}T_{i+2,k-1}N_k) \\
\prod_{g\in T_{k-1}\setminus\{t_{k-1}\}}\IAM(ge_kh,gN_kh).
\nonumber
\end{multline}
We reformulate the second term 
\begin{multline}
\prod_{g\in T_{k-1}\setminus\{t_{k-1}\}}\IAM(ge_kh,gN_kh) = \\ 
\prod_{i=0}^{k-3} \IAM(t_iN_{i+1}T_{i+2,k-1}e_k,t_iN_{i+1}T_{i+2,k-1}N_k),
\nonumber
\end{multline}
thus the IAM operators in $PG_k$ become
\begin{multline}
\bigg( \prod_{i=0}^{k-3}\IAM(t_iN_{i+1}T_{i+2,k-1}e_k)\IAM(t_iN_{i+1}T_{i+2,k-1}N_k)\\
\IAM(t_iN_{i+1}T_{i+2,k-1}e_k,t_iN_{i+1}T_{i+2,k-1}N_k)\bigg) = \prod_{i=0}^{k-2}\IAM(t_iN_{i+1}T_{i+2,k}).
\nonumber
\end{multline}
We can further see that for all $i=0,\ldots, k-2$ the $\IAM(t_iN_{i+1}T_{i+2,k})$ operators are separated from each other.
As a consequence we can apply Theorem~\ref{th:replace_theorem} on these cubic compositions.
\end{proof}
This explains why we can approximate $PG_k$ well by squaring and removing terms in $O(1/N)$ like it was done in the Appendix~\ref{sec:appendix_a} for $PG_2$. By squaring $PG_k$ the $I_i$ operators vanish. 

Furthermore, for each increasing $k$ at least one additional Grover operator diverts the amplitude from the main path in $PG_k$.
This effect increases even with diversions into higher depth.
We propose a solution for $3$-ISP. However we are not aware of any generalization of the solution for higher $k$'s.
We leave a general efficient solution for the $k$-ISP as open problem.

\subsection{Generalization of sequential Grover}
We can use a solution for the ISP for any $k$ to solve the ISP for $mk$ with $m\in\mathbb{N}$. Let $\ket{s_k}$ be the source state and $\ket{t_k}$ the sink state.
Let $A(x_1\cdots x_k)$ be a solution for the iterated search problem for any $k$ with $A\ket{s_k} = \ket{t_k}$.
Then we can solve the $mk$-ISP with the circuit
\[\prod_{i=0}^{m-1} A(x_{1+(m-i)k}\cdots x_{k+(m-i)k}) \ket{s_{mk}} = \ket{t_{mk}},\]
which is the same mechanism the sequential Grover uses. For example using the results of $PG_2$, we can solve the $3$-ISP within $(1+\sqrt{2})\pi\sqrt{N}/4 \approx 1.9\sqrt{N}$ iterations.

\section{A solution to $3$-ISP using $PG_3$}
\label{sec:solution_to_3isp}
We define the different parts of $PG_3$ as it was done for $PG_2$
\begin{multline}
\PGseq{3}(x_1x_2x_3) = \\ \underbrace{\IAM(x_1)O_1(x_1)}_{= PG_{3_1}(x_1)} 
\underbrace{\IAM(x_2)\oracle_2(x_1x_2)}_{= PG_{3_2}(x_1x_2)} \underbrace{\IAM(x_3)\oracle_3(x_1x_2x_3)}_{= PG_{3_3}(x_1x_2x_3)} .
\nonumber
\end{multline}
For each operator of the form $PG_{3_{i_0}}PG_{3_{i_1}}$ with $i_0 < i_1$, or $PG_{3_{i_0}}$ we can construct a cirucit, which executes the operation in one iteration. One iteration includes one or more oracle operators executed in parallel and one or more IAM operators executed in parallel.
With $PG_{3_2}$ the state $\ket{N_1e_2e_3}$ can be reflected without affecting any state on the main path significantly.
The reflection of $\ket{N_1e_2e_3}$ has the effect that the edge $G(e_1e_2e_3,N_1e_2e_3)$ rotates the amplitude back to $\ket{e_1e_2e_3}$ until the amplitude of $\ket{N_1e_2e_3}$ is zero again. Therefore, we first look for constants $c_1$ and $c_2$, such that
\begin{multline}
PG_3^{c_2\sqrt{N}}PG_{3_2}PG_3^{c_1\sqrt{N}}\ket{\mathoverlinee{N_1N_2N_3}} =\\ 
a_1\ket{\mathoverlinee{e_1e_2e_3}} + a_2\ket{\mathoverlinee{e_1e_2N_3}} + a_3\ket{\mathoverlinee{e_1N_2N_3}}
 + O(1/\sqrt{N})\ket{\mathoverlinee{T_3}}
\nonumber
\end{multline}
for some constants $a_1,a_2,a_3$.
The remaining amplitude can be transfered with $[PG_{3_2}PG_{3_3}]$ to the sink state without any diversion like it is done in $PG_2$.
However, to prevent that the amplitude of $\ket{e_1e_2N_3}$ changes its phase before the amplitude of $\ket{e_1N_2N_3}$ is zero,
we have to redistribute the amplitude between the states $\ket{e_1e_2e_3}$ and $\ket{e_1e_2N_3}$, such that $PG_2$ would obtain a solution. 
More formally, find $a_1'$ and $a_2'$ such that there exist a constant $d$ with
\[PG_2^{d\sqrt{N}}\ket{\mathoverlinee{N_1N_2}} = a_1'\ket{\mathoverlinee{e_1e_2}} + a_2'\ket{\mathoverlinee{e_1N_2}} + a_3\ket{\mathoverlinee{N_1N_2}} + O(1/N)\ket{\mathoverlinee{T_2}}.\]
We can obtain $a_2'$ by using the explicit functions of the amplitude evolution of $PG_2$ in Equation~\ref{eq:parallel_grover_k2_amplitude_evolution}.
Further, if $a_2' > a_2$, then the above explained case will occur so we have to retransfer amplitude from $\ket{e_1e_2e_3}$ to $\ket{e_1e_2N_3}$ such that
\begin{multline}
(PG_{3_3}^{\dagger})^{c_3\sqrt{N}}\big(a_1\ket{\mathoverlinee{e_1e_2e_3}} + a_2\ket{\mathoverlinee{e_1e_2N_3}} + a_3\ket{\mathoverlinee{e_1N_2N_3}}\big) =\\ a_1'\ket{\mathoverlinee{e_1e_2e_3}} + a_2'\ket{\mathoverlinee{e_1e_2N_3}} + a_3\ket{\mathoverlinee{e_1N_2N_3}} + O(1/N)\ket{\mathoverlinee{T_3}}.
\nonumber
\end{multline}
The $PG_{3_2}$ operator could also be used for this case, but the edge operator $G(e_1e_2e_3,e_1N_2e_3)$ would divert amplitude from $\ket{e_1e_2e_3}$ to $\ket{e_1N_2e_3}$.
For the other case where $a_2' \leq a_2$ this step can be skipped.

Then we apply $[PG_{3_2}PG_{3_3}]$ until the amplitude of state $\ket{e_1N_2N_3}$ is zero
\begin{multline}
[PG_{3_2}PG_{3_3}]^{c_4\sqrt{N}}\big( a'_1\ket{\mathoverlinee{e_1e_2e_3}} + a'_2\ket{\mathoverlinee{e_1e_2N_3}} + a_3\ket{\mathoverlinee{e_1N_2N_3}}\big) =\\ a_1''\ket{\mathoverlinee{e_1e_2e_3}} + a_2''\ket{\mathoverlinee{e_1e_2N_3}} + O(1/N)\ket{\mathoverlinee{T_3}}.
\nonumber
\end{multline}
If in the previous step $a_2' \geq a_2$, then the solution with negligible error is obtained, otherwise the remaining amplitude has to be transfered to the sink state with $PG_{3_3}^{c_5\sqrt{N}}$.

The calculations for the constants can be found in \cite{code_results} and are $c_1\approx0.78, c_2\approx0.17, c_3\approx0.05, c_4\approx0.5, c_5=0$ and a total number of iterations $\approx1.51\sqrt{N}$. The amplitude evolution on the operator can be seen in Figure~\ref{fig:k3_efficient_solution}.
\begin{figure}
\center\includegraphics[width=0.60\textwidth]{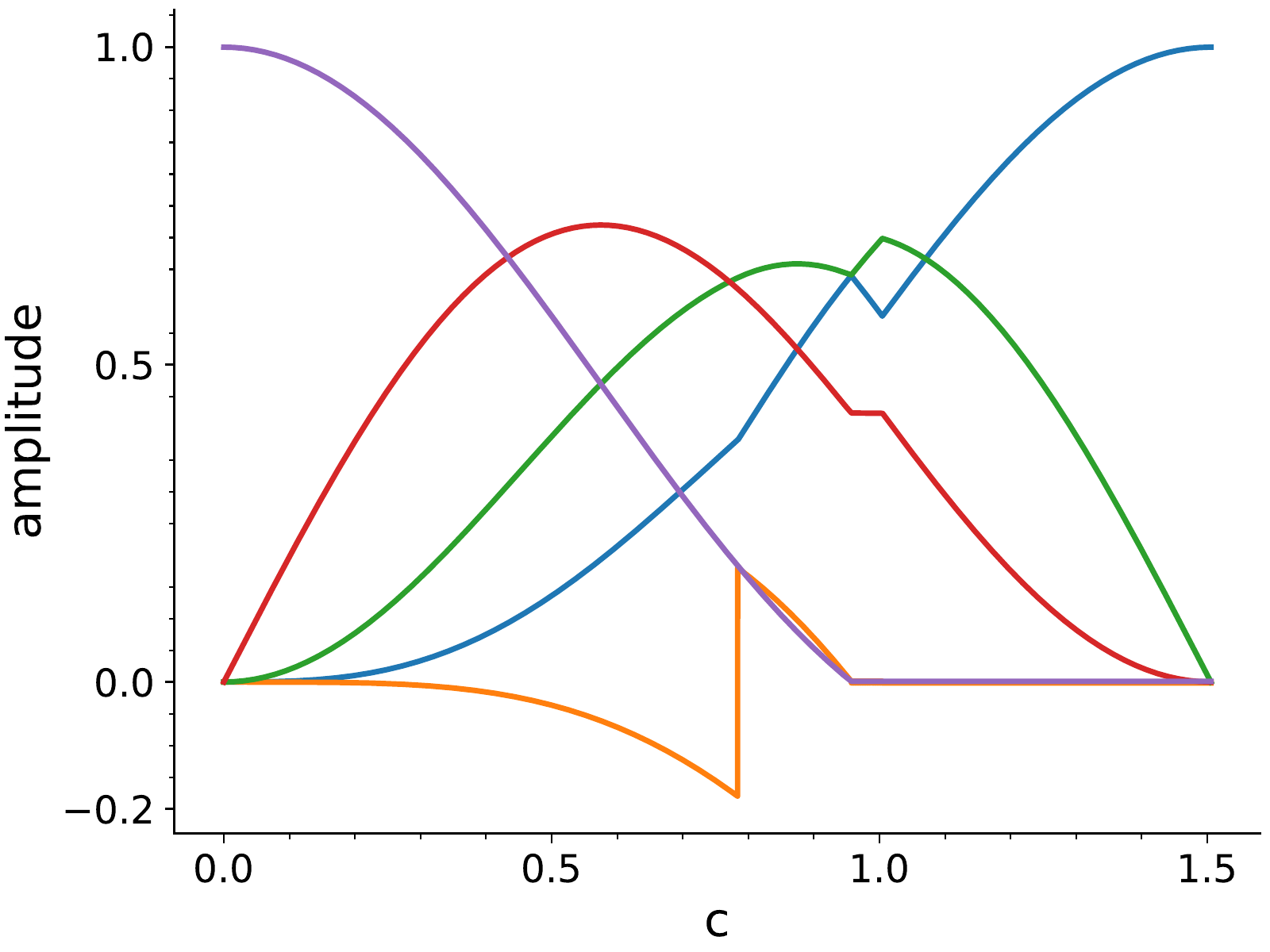}
\caption{The amplitude evolution of the solution for the $3$-ISP. \mbox{\raisebox{0.25ex}{\legendmarker{matplotlibgreenk31}} $\ket{\mathoverlineplot{e$_1$e$_2$e$_3$}}$}, \mbox{\raisebox{0.25ex}{\legendmarker{matplotlibgreenk32}} $\ket{\mathoverlineplot{N$_1$e$_2$e$_3$}}$}, \mbox{\raisebox{0.25ex}{\legendmarker{matplotlibgreenk33}} $\ket{\mathoverlineplot{e$_1$e$_2$N$_3$}}$}, \mbox{\raisebox{0.25ex}{\legendmarker{matplotlibgreenk34}} $\ket{\mathoverlineplot{e$_1$N$_2$N$_3$}}$}, \mbox{\raisebox{0.25ex}{\legendmarker{matplotlibgreenk35}} $\ket{\mathoverlineplot{N$_1$N$_2$N$_3$}}$}.}
\label{fig:k3_efficient_solution}
\end{figure}

\section{Lower Bound for $k$-ISP}
It can be concluded that the approximations of $PG_k$ only contains one path connecting the source state with the sink state.
Therefore, the number of iterations to reach the solution state with negligible error for circuits of the sequential form 
\[\prod_{l=1}^{O(\sqrt{N})}\IAM(x_{j_l})O_{i_l}(x_1\ldots x_{i_l})\]
for some $i_l, j_l\in\{1,\ldots, k\}$ for all $l$ can be lower bounded with the approximation for $PG_k$. The Grover operators, which divert the amplitude from any state on the main path, cannot speed up the transport of amplitude to the sink state. By removing these Grover operators only one path of Grover operators connecting the source with the sink state remains. Thus, a lower bound for a path of $k+1$ Grover operators can be used as lower bound for the $k$-ISP. A path of Grover operators can approximated with
\[P_k =
\begin{bmatrix}
          1 & 2/\sqrt{N} &             & 0 \\
-2/\sqrt{N} &     \ddots &      \ddots &   \\
            &     \ddots &      \ddots & 2/\sqrt{N}\\
          0 &            & -2/\sqrt{N} & 1 \end{bmatrix}\in\mathbb{R}^{k+1,k+1}.
\] 
By removing the $-2/\sqrt{N}$ terms we can lower bound the number of iterations. Let $u_i$ be the euclidean unit vector in the $i$th dimension and $\tilde{P}_k$ be the approximation of $P_k$ without the $-2/\sqrt{N}$ terms. 
Then we determine the $i$th power of the matrix $\tilde{P}_k$ in Jordan normal form and come to the conclusion that
\begin{equation}
\tilde{P}_{k}^{c\sqrt{N}}u_{k+1} = u_1\text{ for }c = \frac{(k!)^{\frac1k}}{2} + O(\frac1{\sqrt{N}})\geq \frac{k}{2e}.
\label{eq:lower_bound}
\end{equation}
The same approximation was done by Ozhigov \cite{ozhigov1999speedup}, but he did not determine an exact value.

\section{Summary}
In this paper, we gave an interpretation of $PG_2$ as rotation of a vector within a $3$-sphere and an interpretation of a class of solutions for the ISP as dynamic system system of IAM and Grover operators. Additionally, we gave an approximation method for $PG_k$, which explains the behaviour of the amplitude evolution. With the approximation method we showed that the sole application of $PG_k$ does not work as solution for the $k$-ISP for $k>2$. For $3$-ISP we presented a solution, which is more efficient than the general sequential Grover using $PG_2$. A solution for the diverted amplitude problem for $k$-ISP for $k>3$ remains as open problem.
\appendix

\section{\label{sec:appendix_a}Alternative proof for the amplitude evolution of $PG_2\ket{\mathoverline{N_1N_2}}$}
For an orthogonal matrix $A$, for large enough $N$ and constant $c$ it is
\begin{equation}
(A+O(1/N))^{c\sqrt{N}} = A^{c\sqrt{N}} + O(1/\sqrt{N}), \label{eq:approx_nthpower}
\end{equation}
 where $O(1/N)$ is applied componentwise.

As we have seen in equation \ref{eq:PG_seq_2_matrix_representation}, we can express one iteration of the $PG_2$ as
\begin{multline}
PG_2 = [G(e_1e_2, N_1e_2)G(e_1N_2, N_1N_2)] \\ [G(e_1e_2,e_1N_2)\IAM(N_1e_2,N_1N_2)].
\nonumber
\end{multline}
By removing the $O(1/N)$ terms we can approximate $PG_2^2$ using \ref{eq:approx_nthpower} with
\begin{equation}
PG_2^2 = [G(e_1N_2,N_1N_2)G(e_1e_2,e_1N_2)]^2.
\end{equation}
By further removing $O(1/N)$ terms, we can approximate this operation with the matrix
\[
P_2^2 = \begin{bmatrix}
1 & \frac4{\sqrt{N}} & 0 \\
-\frac{4}{\sqrt{N}} & 1 & \frac{4}{\sqrt{N}} \\
0 & -\frac{4}{\sqrt{N}} & 1 \end{bmatrix} 
\]
acting on $\ket{\mathoverline{e_1e_2}}, \ket{\mathoverline{e_1N_2}},\ket{\mathoverline{N_1N_2}}$. For tridiagonal Toeplitz matrices there exist a closed formula for the eigenvalues and eigenvectors \cite{noschese2013tridiagonal}. The inital state is 
\[PG_2^{c\sqrt{N}}H^{\otimes n}\ket{0} = PG_2^{c\sqrt{N}}\ket{\mathoverline{N_1N_2}} + O(1/\sqrt{N})\ket{\mathoverline{T_2}}.\]
Applying the eigendecomposition on $P_2^2$ the approximation of the initial amplitude vector we can approximate
\begin{multline}
PG_2^{c\sqrt{N}}\ket{\mathoverline{N_1N_2}} = \sin(\sqrt{2}c)^2\ket{\mathoverline{e_1e_2}} + \sqrt{2}\sin(\sqrt{2}c)\cos(\sqrt{2}c)\ket{\mathoverline{e_1N_2}}\\ + \cos(\sqrt{2}c)^2 \ket{\mathoverline{N_1N_2}}+ O(1/\sqrt{N})\ket{\mathoverline{T_2}},
\label{eq:parallel_grover_k2_amplitude_evolution}
\end{multline}
which agrees with Ozhigov's results \cite{ozhigov1999speedup}.

\end{document}